\documentclass[12pt, reqno]{amsart}
\usepackage{amssymb, mathrsfs}
\usepackage{latexsym}
\usepackage{amsmath, commath}
\usepackage{amsthm}
\usepackage{amsfonts}
\usepackage{dsfont}
\usepackage{mathtools}
\usepackage{epsfig}
\usepackage{verbatim}
\usepackage{graphicx}
\usepackage{accents}
\usepackage{mathdots}
\usepackage{stmaryrd}

\usepackage{pstricks-add}

\usepackage{tikz}
\usetikzlibrary{cd}
\usetikzlibrary{patterns}
\usetikzlibrary{calc}
\usepackage{pgfplots}
\pgfplotsset{compat=1.11}

\usepackage[left=1.4in,right=1.4in]{geometry}
\usepackage{enumerate}
\usepackage[colorlinks=true,linkcolor=blue,citecolor=blue]{hyperref}

\DeclareMathOperator{\GL}{GL}
\DeclareMathOperator{\Tr}{Tr}

\newtheorem*{lemma*}{Lemma}

\newtheorem{theorem}{Theorem}[section]
\newtheorem{proposition}[theorem]{Proposition}
\newtheorem{lemma}[theorem]{Lemma}

\newtheorem{corollary}[theorem]{Corollary}

\newtheorem*{open question}{Open Question}

\newtheorem*{question*}{Question}

\theoremstyle{definition}

\newtheorem*{claim*}{Claim}
\newtheorem{definition}[theorem]{Definition}
\newtheorem{example}[theorem]{Example}

\theoremstyle{remark}
\newtheorem{remark}[theorem]{Remark}

\usepackage{xstring}

\newcommand{\wamatrix}[1]{%
  \StrRemoveBraces{#1}[\waouterraw]%
  \StrSubstitute{\waouterraw}{,}{\\}[\waouter]%
  \StrRemoveBraces{\waouter}[\wainnerraw]%
  \StrSubstitute{\wainnerraw}{,}{&}[\wainner]%
  \begin{pmatrix}
    \wainner
  \end{pmatrix}%
}

\numberwithin{equation}{section}

\begin{document}
\title[Explicit Matrices With Complexity $4n-\mathrm{o}(n)$ and Local Logic Gates]{Explicit Matrices over $\mathbb Z_2$ with CNOT and Row Complexity $4n-\mathrm{o}(n)$ and Local Logic Gates}
\author{Sherry Gong}
\address[]{SG: Texas A\&M University, College Station, TX 77840 USA}
\email{sgongli@tamu.edu}
\author{Andrew Yu}
\address[]{AY: Harvard University, Cambridge, MA 02138 USA}
\email{andrewyu45@gmail.com }

\thanks{Supported in part by NSF and ARO MURI grant W911NF-20-1-0082.}

\begin{abstract}

In this article, we present an explicit family of invertible $n\times n$ matrices over $\mathbb Z_2$ whose CNOT and row complexity is at least $4n-\text{o}(n)$; equivalently, reducing these matrices to the identity requires at least $4n-\text{o}(n)$ elementary row operations. Moreover, the same complexity lower bound holds in the stronger computational model where the CNOT gates are replaced by arbitrary local linear logic gates, namely arbitrary invertible linear transformations acting on pairs of coordinates.


Let $G_n$ denote the permutation group generated by local logic gates acting on the set of binary strings of length $n$. We prove that $G_n$ is naturally isomorphic to the group of all invertible affine transformations of the vector space $\mathbb Z_2^n$, thus reducing the problem of estimating the quantum complexity of permutations in $G_n$ to the row reduction complexity of invertible matrices over $\mathbb Z_2$. As an application, we show that the permutations associated with our explicit matrices have quantum complexity at least $4n-\text{o}(n)$.

\end{abstract}

\maketitle

\section{Introduction}

$$\\$$

Constructing explicit invertible matrices over the field $\mathbb Z_2$ with large CNOT and row-reduction complexity is a longstanding open problem. Given an invertible matrix over $\mathbb Z_2$, the finite field with two elements, its \emph{row complexity} is the minimum number of elementary row operations required to reduce it to the identity matrix. Its \emph{CNOT complexity} is the minimum number of Type III elementary row operations required for the same reduction, where a Type III operation consists of adding one row to another.
While a simple counting argument shows that almost all invertible matrices require at least $cn^2/\log n$ elementary row operations, proving nontrivial lower bounds for explicit matrices has proved to be remarkably difficult. Gowers highlighted this problem by asking for an explicit matrix requiring a large linear  number of elementary row operations \cite{Gowers2000, Gowers2009}.

The asymptotic theory of CNOT complexity has been well studied. A classical counting  A classical counting argument shows that almost all matrices have CNOT complexity $\Omega(n^2/\log(n))$ \cite{PatelMarkovHayes2008}.

Despite these counting arguments, the construction of explicit families of matrices that have superlinear CNOT length is a major open question in the field. Indeed, until very recently, the most prominent bounds in the CNOT metric have been in the $3n$ range \cite{Bataille2022,BuFanJoo2025,ChristensenJorgensenPavlogiannisVandePol2025}. Much of the work to this effect has centered around permutation matrices with long cycles, which have been shown to need circuits of length $3(n-1)$ \cite{BuFanJoo2025}.

In this paper, we construct an explicit family of invertible $n\times n$ matrices over $\mathbb Z_2$ whose CNOT and row complexity is at least $
4n-\mathrm{o}(n).$

While this note was in preparation, J\o rgensen independently posted an explicit $4m-o(m)$ construction also based on Sergeev's result in \cite{Jorgensen2026}. Despite our result being based on a similar matrix, we believe that the main step in our proof is substantially different, in that we directly showed that a matrix of the form $\wamatrix{{{U,I},{I,U^T}}}$ was invertible instead of embedding it into a larger invertible matrix as J\o rgensen's asymptotic construction did.

Our work is motivated in part by the complexity problem of local logic gates in quantum computation, which was proposed to us by Michael Freedman. A qubit is represented by a unit vector, and a quantum computation is a unitary transformation implemented as a sequence of  logic gates.
The quantum complexity of a unitary transformation is the minimum number of logic gates required to implement it (cf. \cite{Brown2021, BrownFreedmanLinSusskind2021, BuGarciaJaffeKohLi2022, FreedmanKitaevLarsenWang2003, Kitaev1997, Nielsen2006, NielsenChuang, NielsenDowlingGuDoherty2006, Shor1994, Lin2019, Aaronson2016}). In this article, we study permutations generated by local logic gates acting on binary strings of length $n$. We prove that the permutation group generated by these gates is naturally isomorphic to the group of all invertible affine transformations of the vector space $\mathbb Z_2^n$. Moreover, every elementary matrix is realized by a local logic gate, and every local logic gate is an affine transformation whose linear part is either the identity, an elementary matrix, or a product of two elementary matrices, and whose translation part is supported on at most two coordinates.
This characterization reduces the problem of estimating the quantum complexity of permutations generated by local logic gates to the row-reduction complexity of invertible matrices over $\mathbb Z_2$. As applications, we obtain a subquadratic upper bound for the quantum complexity of all such permutations, prove that almost all of them have a matching asymptotic lower bound, and show that the permutations corresponding to our explicit matrices have quantum complexity at least $4n-\mathrm{o}(n)$.

The article is organized as follows. In Section~2, we construct an explicit family of invertible matrices over $\mathbb Z_2$ with CNOT and  row complexity at least $4n-\mathrm{o}(n)$. In Section~3,  we describe the concept of local logic gates introduced to us by Freedman. We identify the permutation group generated by local logic gates with the group of invertible affine transformations over $\mathbb Z_2$ and derive a quantum complexity upper bound. In Section~4, we prove that almost all permutations generated by local logic gates have quadratic-over-logarithmic quantum complexity. Finally, in Section~5, we relate the quantum complexity of affine transformations to the row complexity of their linear parts and deduce the quantum complexity lower bound $4n-\mathrm{o}(n)$ for the permutations arising from our explicit matrices.

\section{ 
An explicit  invertible $n\times n$  matrix with CNOT and row complexity lower bound $4n-\text{o}(n)$.   }

$$\\$$

Let $n$ be a positive integer. Let  $A\in GL(n, \Bbb{Z}_2)$ be an $n\times n$ invertible matrix over $\Bbb{Z}_2 $, the finite field of two elements.  The CNOT length $\ell_n(A)$  of $A$ is defined to be the minimal number of type III elementary matrices (the type where one row gets added to another) whose product is $A$. We call the Type III elementary matrices \emph{transvections}. The row complexity of $A$, $\ell_{row}(A)$,  is defined to 
be the minimal number of elementary matrices  whose product is $A$. 

 In this section, we construct explicit $n\times n$ matrices with CNOT length at least $4n-\text{o}(n)$.

One may wonder if our lower bounds are an artifact of the set of gates chosen for our model of linear reversible CNOT circuits. However, we find similar bounds for a stronger computational  model where we permit more gates. 

Let $\lambda_n(A)$ denote the 2-local gate length of $A$, that is, the number of gates needed to implement $A$ as a reversible circuit when one is allowed to apply an arbitrary element of $\GL(2,\mathbb{Z}_2)$ to an arbitrary pair of coordinates (instead of only being allowed to add one coordinate to another as in the CNOT length). Since every CNOT gate and every elementary matrix is a $2$-local linear gate, we have
\[
\lambda_n(A)\le \ell_{\mathrm{row}}(A)\le \ell_n(A).
\]
We also show that our explicit $n\times n$ matrices have 2-local gate length at least  $4n-\text{o}(n)$, which suggests that our lower bound for $\ell_n$ is not a peculiarity of the choice of CNOT gates.

The proof of our result rests on a recent breakthrough of Sergeev, in which he showed a $5n-o(n)$ bound for the XOR circuit complexity of an explicit family of $n \times n$ matrices \cite{Sergeev2025}. Because additive circuits are not required to be reversible, the main difficulty of our proof was modifying the matrices to be in $\GL(n, \mathbb{Z}_2)$ while not losing too much of the circuit complexity.

Our result also implies that   our  explicit family  of $n\times n$ matrices have row complexity  at least $4n-o(n)$. As shown in Section~5, the corresponding permutations generated by local logic gates also have quantum complexity at least $4n-o(n)$.

\subsection{Definitions and background}

For $1 \leq i < j \leq n$ and $B \in GL(2, \Bbb{Z}_2)$ let $p_{ij}(B) \in GL(n, \Bbb{Z}_2)$ be obtained by applying $B$ to the $i$ and $j$ coordinates among $x_1, x_2, \ldots, x_n$. For example, $p_{ij}\left(\wamatrix{{{1,1},{0,1}}}\right)$ is given an elementary row operation where row $j$ is added to row $i$.

Let $\lambda_n(A)$ denote the minimal number of matrices of the form $p_{ij}(B) \in GL(n, \Bbb{Z}_2)$ for $B \in GL(2, \Bbb{Z}_2)$ whose product is $A$.  
It is clear that $\lambda_n(A) \leq \ell_n(A)$. 
As discussed in Section 4, most matrices in $GL(n, \Bbb{Z}_2)$ have $\lambda_n$ and $\ell_n$ both $\geq \frac{n^2}{2\log(n)}$.
However explicit constructions of matrices with $\lambda_n(A)$ and $\ell_n(A)$ greater than $3n$ has been elusive. In this section, we construct explicit matrices with $\lambda_n(A)$ and $\ell_n(A)$ greater than $4n-\text{o}(n)$.  


We will make use of results on additive circuits. For an $r\times n$ matrix
$A$ over $\mathbb{Z}_2$, let $C_\oplus(A)$ denote the additive
(XOR) circuit complexity of $A$. Following Sergeev, we shall also denote
this quantity by $L(A)$.
 This is the number of XOR operations (additions in $\Bbb{Z}_2$) needed in a program that computes $Ax$ for $x = \wamatrix{{{x_1},{x_2},{\vdots},{x_n}}}$. In more mathematical terms, the additive complexity $L(A)$ can be defined as follows.

\begin{definition}[Additive circuit]
Let
\[
X=(x_1,\dots,x_n)^T
\]
be a collection of variables over $\mathbb Z_2$.

An \emph{additive circuit} is a sequence of linear forms
\[
y_1,\dots,y_{n+L}
\]
defined by
\[
y_1=x_1,\quad \dots,\quad y_n=x_n,
\]
and
\[
y_{n+t}=y_i+y_j
\]
for some indices
\[
i,j<n+t,
\qquad
t=1,\dots,L.
\]

The integer $L$ is called the \emph{complexity} of the circuit. The variables
\[
y_{o_1},\dots,y_{o_m}
\]
are designated as outputs.
\end{definition}

\begin{definition}[Additive complexity]
Let
\[
A\in M_{m\times n}(\mathbb Z_2),
\]
and let \(a_1,\dots,a_m\) denote the rows of \(A\).

The additive complexity $L(A)$ is the minimum complexity of an additive
circuit whose outputs are
\[
\langle a_1,X\rangle,\dots,\langle a_m,X\rangle.
\]
\end{definition}

In other words, the \emph{additive complexity} $L(A)$ is the smallest integer $L$ for which there exist vectors
\[
v_1,\dots,v_{n+L}\in \mathbb Z_2^n
\]
satisfying the following conditions:

\begin{enumerate}
\item
We have
\[
v_1=e_1,\dots,v_n=e_n,
\]
where $e_1,\dots,e_n$ are the standard basis vectors of $\mathbb Z_2^n$;

\item for each \(t=1,\dots,L\),
\[
v_{n+t}=v_i+v_j
\]
for some indices \(i,j<n+t\);

\item every row $a_1,\dots,a_r$ of $A$ occurs among the vectors
\[
v_1,\dots,v_{n+L}.
\]
\end{enumerate}

Equivalently, $L(A)$ is the minimum number of vector additions required to construct all rows of $A$ starting from the standard basis vectors of $\mathbb Z_2^n$. It is clear $L(A)\leq l_{row}(A).$

Following Sergeev's notation in \cite{Sergeev2025}, for two (not necessarily square) matrices $A$ and $B$ denote $A \boxplus B$ to be $\wamatrix{{{A,0},{0,B}}}$. 

The following is an immediate consequence of Sergeev's paper \cite{Sergeev2025} and the transposition principle:

\begin{theorem}[Sergeev]  If $B$ is an $a \times b$ matrix with $a > b$ such that $B$ does not have any rows with exactly one $1$, and there is $k \geq 3$ such that any $2k$ rows from $B$ are linearly independent, then 
\[L(B \boxplus B^T) \geq 3a+ 2\frac{2k-4}{2k-1}a^{1-1/k}-3b.\]
In particular, if $k \gg log(a)$, then $L(B \boxplus B^T) \geq 5a-O\left(\frac{a\log a}{k}\right)-3b.$
\label{Sergeev_theorem}
\end{theorem}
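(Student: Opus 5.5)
This theorem is described as an immediate consequence of Sergeev's main result in \cite{Sergeev2025} and the transposition principle. The plan is to take Sergeev's lower bound as a black box and transport it to the form stated.

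\textbf{Step 1: recall Sergeev's statement.} Sergeev proves a lower bound for the XOR complexity of a block matrix built from a matrix with $2k$-wise independent rows (or, dually, columns) and no degenerate rows. Concretely, he works with a matrix $M$ of size $a\times b$ together with $M^T$. I would quote his main theorem in its native form: for a matrix whose columns (or rows) satisfy the independence and weight hypotheses,
\[
L \geq 3a+ 2\tfrac{2k-4}{2k-1}a^{1-1/k}-3b,
\]
possibly stated for a matrix such as $\wamatrix{{{B^T,0},{0,B}}}$, or for $B$ together with its transpose in some other arrangement.

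\textbf{Step 2: match the orientation via transposition.} If Sergeev's formulation has the roles of rows and columns interchanged relative to the present one, apply the transposition principle. For an $m\times n$ matrix $A$ with no zero rows or columns,
\[
L(A^T)=L(A)+m-n.
\]
Take $A=B\boxplus B^T$. Then $A^T=B^T\boxplus B$, which is $A$ up to a permutation of rows and columns. A permutation of rows and columns does not change $L$. The matrix is square of size $(a+b)\times(a+b)$, so the correction term $m-n$ vanishes. Hence the bound transfers directly, whichever of $B$ or $B^T$ Sergeev places in which block. One point to check is that $B$ has no zero rows and no zero columns. Zero rows are excluded by the $2k$-wise independence hypothesis. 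For zero columns, I would either note that Sergeev's hypotheses exclude them, or observe that deleting a zero column of $B$ lowers $b$ and only strengthens the bound. I expect this bookkeeping to be the main obstacle: it amounts to verifying that the hypotheses match exactly (no weight-one rows, $a>b$, $k\ge 3$).

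\textbf{Step 3: the asymptotic form.} Write
\[
a^{1-1/k}=a\,e^{-\ln a/k}\ge a\left(1-\tfrac{\ln a}{k}\right)
\quad\text{and}\quad
\tfrac{2k-4}{2k-1}=1-O(1/k).
\]
The middle term then becomes
\[
2a-O\!\left(\tfrac{a\log a}{k}\right)-O(a/k)=2a-O\!\left(\tfrac{a\log a}{k}\right).
\]
Substituting this into the main bound gives $5a-O(a\log a/k)-3b$. This is meaningful when $k\gg\log a$, as stated.
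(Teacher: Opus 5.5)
Your Step 1 effectively assumes the statement. You posit that Sergeev's theorem already is the bound $3a+2\frac{2k-4}{2k-1}a^{1-1/k}-3b$ for some arrangement of $B$ and $B^T$, so that only orientation needs fixing. According to the Remark following the statement, the result actually imported (Sergeev's Theorem~2) is a bound on the \emph{reduced} additive complexity of the single $a\times b$ matrix $B$: $L^*(B)\ge a+\frac{2k-4}{2k-1}a^{1-1/k}-b$. Here $L^*$ is taken over extended circuits with extra nonessential inputs, and it subtracts the number of distinct type vectors of weight at least $2$. Your Step 2 cannot bridge this. The matrix is square, and $(B\boxplus B^T)^T=B^T\boxplus B$ is a row and column permutation of $B\boxplus B^T$, so the transposition principle applied to the direct sum does nothing. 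As a result, nothing in your argument produces the $3a$, the factor $2$ on the sublinear term, or the $-3b$. The hypothesis bookkeeping you flag as the main obstacle is not where the content lies.

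Two steps are missing.

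\textbf{Splitting inequality.} You need $L(B\boxplus B^T)\ge L^*(B)+L(B^T)$. Take an optimal circuit for $B\boxplus B^T$ and view it as an extended circuit for $B$, treating the $a$ inputs of the $B^T$ block as nonessential. Each new distinct type of weight at least $2$ is the sum of two earlier types, so these types can be generated with one XOR apiece. The outputs of the $B^T$ block occur among them or among the unit vectors. Hence the number of such types is at least $L(B^T)$, and the gate count minus that number is at least $L^*(B)$.

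\textbf{Transposition on the rectangular block.} Applied to $B$ itself, the principle does produce a shift: $L(B^T)=L(B)+a-b\ge L^*(B)+a-b$. The inequality $L(B)\ge L^*(B)$ holds because an ordinary circuit is an extended circuit with all types zero.

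Together these give $L(B\boxplus B^T)\ge 2L^*(B)+a-b$. Inserting Sergeev's bound on $L^*(B)$ yields exactly $3a+2\frac{2k-4}{2k-1}a^{1-1/k}-3b$. Your Step 3, deriving the asymptotic form from $e^{-x}\ge 1-x$ and $\frac{2k-4}{2k-1}=1-O(1/k)$, is correct.
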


\begin{remark}
The theorem above is an immediate consequence of Theorem~2 of Sergeev \cite{Sergeev2025}
together with the transposition principle.

Following Sergeev \cite{Sergeev2025}, let
\[
A\in M_{m\times n}(\mathbb Z_2)
\]
with rows $a_1,\dots,a_m$.

Consider an extended additive circuit computing the linear operator
\[
X\mapsto AX,
\]
where
\[
X=(x_1,\dots,x_n)^T
\]
are the essential input variables and
\[
Y=(y_1,\dots,y_t)^T
\]
are additional nonessential input variables.

Every gate computes a linear form
\[
\langle a,X\rangle+\langle b,Y\rangle,
\]
where
\[
a\in\mathbb Z_2^n,\qquad b\in\mathbb Z_2^t.
\]

The vector $b$ is called the type of the gate. Its weight is the Hamming
weight
$
\operatorname{wt}(b),
$ the number of non-zero coordinates in the vector $b$.

The reduced complexity of the circuit is defined as
\[
\#\{\text{addition gates in the circuit}\}
-
\#\{\text{distinct type vectors } b
\text{ with }\operatorname{wt}(b) \geq 2\}.
\]

The reduced additive complexity $L^*(A)$ is the minimum reduced complexity
among all extended additive circuits computing $AX$.

Sergeev proved that if $B$ satisfies the hypotheses of the theorem, then
\[
L^*(B)
\ge
a+\frac{2k-4}{2k-1}a^{1-1/k}-b.
\]

The transposition principle states that for every matrix
$B\in M_{a\times b}(\mathbb Z_2)$,
\[
L(B^T)\ge L^*(B)+a-b.
\]

Since
\[
L(B\boxplus B^T)
\ge
L^*(B)+L(B^T),
\]
we obtain
\[
L(B\boxplus B^T)
\ge
2L^*(B)+a-b.
\]
Substituting Sergeev's lower bound for $L^*(B)$ yields
\[
L(B\boxplus B^T)
\ge
3a+
2\frac{2k-4}{2k-1}a^{1-1/k}
-3b,
\]
which is exactly the statement of Theorem \ref{Sergeev_theorem}.
\end{remark}

In particular, Sergeev gave the following example.

\begin{example}(Sergeev \cite{Sergeev2025})
Let $p$, $s$, $N$ be positive integers with $N>sp$ and $s>1$. Let $f(x)\in \mathbb{Z}_2[x]$ be an irreducible
polynomial of degree \(p\). Note that such a polynomial has been constructed deterministically in polynomial time by Shoup in \cite{Shoup1990Irreducible}

We define the finite field with $2^p$ elements by
\[
\mathbb{F}_{2^p}
=
\mathbb{Z}_2[x]/(f(x)).
\]

The elements of $\mathbb{F}_{2^p}$ are residue classes of polynomials modulo
$f(x)$. Since $f(x)$ is irreducible, the ideal $(f(x))$ is maximal, and
therefore $\mathbb{Z}_2[x]/(f(x))$ is a field.

Every element of $\mathbb{F}_{2^p}$ can be represented uniquely in the form
\[
a_0+a_1\alpha+\cdots+a_{p-1}\alpha^{p-1},
\qquad
a_i\in\mathbb{Z}_2,
\]
where $\alpha$ denotes the image of $x$ in the quotient
$\mathbb{Z}_2[x]/(f(x))$. Consequently,
\[
|\mathbb{F}_{2^p}|=2^p.
\]

 Let $\alpha_1, \alpha_2, \cdots \alpha_m$ be pairwise distinct non-zero elements in $\Bbb{F}_q = \Bbb{F}_{2^p}$. $\Bbb{F}_q$ is a vector space over $\Bbb{Z}_2$  of dimension $p$, so let us consider a fixed basis of $\Bbb{F}_{2^p}$ over $\Bbb{Z}_2$.  Then for the matrix 
\[U = \begin{pmatrix}
\alpha_1 & \alpha_1^2 & \cdots & \alpha_1^s \\
\alpha_2 & \alpha_2^2 & \cdots & \alpha_2^s \\
\vdots & \vdots & \ddots & \vdots \\
\alpha_N & \alpha_N^2 & \cdots & \alpha_N^s
\end{pmatrix} \]
which is an $N \times s$ matrix over $\Bbb{F}_q$, we may view it as an $N \times sp$ matrix over $\Bbb{Z}_2$ using our basis of $\Bbb{F}_q$ over $\Bbb{Z}_2$. Then Sergeev observed that this matrix satisfies all the conditions required of $B$ in  Theorem \ref{Sergeev_theorem}  with $k = \lfloor s/2 \rfloor$. (The linear independence of the subsets of rows is from the fact that you can take size $s$ subsets of the rows and see it as a Vandermonde matrix over $\Bbb{F}_q$, and the fact that no row has exactly one 1 comes from the fact that since $\alpha_i \neq 0$, each of $\alpha_i, \alpha_i^2, \ldots \alpha_i^s$ will contribute at least one $1$ to the row over $\Bbb{Z}_2$, so the row sums are at least $s$.)

Sergeev then plugged $s = \lceil \sqrt{n} \rceil$, $p =\lceil \log(n) \rceil$ and $N = n$ to get his matrix $B$ with dimensions $ \approx n \times \sqrt{n}$ and $A$ with dimensions $\approx (n + \sqrt{n} )\times ( n + \sqrt{n})$ and $L(A) \geq 5n-\text{o}(n).$ 

It is easy to check that if $s$ were chosen to be $\lceil n^{1/3} \rceil$ instead of $\lceil \sqrt{n} \rceil$, the resulting matrix $U \boxplus U^T$ would still satisfy $L(U \boxplus U^T) \geq 5n-o(n)$, because we would have $k \asymp n^{1/3} \gg \log n$ and $sp \asymp n^{1/3} \log n = \text{o}(n)$

\end{example}

\subsection{Main result and its proofs}

In order to use these results we will show the following lemma that relates $\lambda_n$ to $L$.

\begin{lemma} Let $A \in GL(n, \Bbb{Z}_2)$ and let $P$ be a permutation matrix, i.e. an invertible matrix with exactly one $1$ in every column and every row. Then $L(A+P) \leq \lambda_n(A) +n$.
\label{comparison_lemma}
\end{lemma}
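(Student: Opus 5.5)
Write $A = g_1 g_2\cdots g_\lambda$ with $\lambda=\lambda_n(A)$ and each $g_t = p_{i_tj_t}(B_t)$. The idea is to simulate this product inside an additive circuit while keeping a copy of the identity around, so that at the end we can read off the rows of $A+P$ with at most one extra addition per row. Concretely, we build the rows of $A$ incrementally as rows of partial products. Put $M_0 = I$ and $M_t = g_t^{-1}\cdots g_1^{-1}$ acting on the left, or more conveniently $M_t = M_{t-1}g_t$ applied in whichever order makes each step act on rows. Left multiplication by $p_{ij}(B)$ replaces rows $i$ and $j$ of a matrix by $\mathbb{Z}_2$-linear combinations of those two rows, prescribed by $B$; every other row is unchanged. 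Since $A = g_1(g_2(\cdots (g_\lambda I)))$, we process the gates in the order $g_\lambda, g_{\lambda-1},\dots,g_1$, each as a left multiplication. The rows of the current matrix are always vectors already present in our circuit, starting from $e_1,\dots,e_n$.

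\textbf{Cost of one gate.} Each new row is $b_{11}r_i+b_{12}r_j$ or $b_{21}r_i+b_{22}r_j$. Since $B$ is invertible, each row of $B$ is one of $(1,0)$, $(0,1)$, $(1,1)$. A new row equal to an old row costs nothing. A new row equal to $r_i+r_j$ costs one addition. Both rows of $B$ cannot equal $(1,1)$, so each gate costs at most one addition. Hence all rows of $A$ appear among the vectors of an additive circuit with at most $\lambda$ additions. Throughout, $e_1,\dots,e_n$ remain available as $v_1,\dots,v_n$.

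\textbf{Finishing.} The row of $A+P$ indexed $m$ is $a_m + e_{\sigma(m)}$, where $\sigma$ is the permutation of $P$. Both summands are already in the circuit, so one further addition produces it. Doing this for all $n$ rows gives $L(A+P)\le \lambda_n(A)+n$. If some sum $a_m+e_{\sigma(m)}$ coincides with something already present, the count only improves. Zero rows would also only lower the count, but under the definition of $L$ a zero output is awkward. It cannot in fact occur in a harmful way: the zero vector is not generated unless by the sum of equal vectors, which still counts as one addition, so the bound holds regardless.

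\textbf{Main point to check.} The only real subtlety is the per-gate accounting: an invertible $2\times 2$ matrix over $\mathbb{Z}_2$ has at most one row equal to $(1,1)$. Equivalently, a general 2-local gate costs no more than a single CNOT in the additive model, since swaps and identities are free relabelings. I would verify this by listing the six elements of $\GL(2,\mathbb{Z}_2)$. The rest is bookkeeping, namely that the order of multiplication yields the rows of $A$ itself and not of $A^{-1}$ or $A^T$. This is handled by treating each step as a row operation starting from $I$.
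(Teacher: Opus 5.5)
Your proof is correct and follows the paper's argument. You simulate the factorization of $A$ gate by gate starting from $e_1,\dots,e_n$; the paper phrases this as acting on the vector $x$ of inputs, which is the same as your row operations on $I$. Each $p_{ij}(B)$ costs at most one XOR because an invertible $2\times 2$ matrix over $\mathbb{Z}_2$ has at most one row equal to $(1,1)$, and then one more addition per row adds $e_{\sigma(m)}$. The stray formula $M_t=M_{t-1}g_t$ would be a column operation, but you immediately replace it with the correct left-multiplication order. Your remark that a zero row of $A+P$ can still be produced by a single gate $v_i+v_i$ covers a point the paper leaves implicit.
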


\begin{proof}
Let $\lambda = \lambda_n(A)$ and let $A = g_\lambda g_{\lambda-1} \ldots g_1$. We would like to show that $(A+P)x$ can be calculated from $x$ with $\lambda+n$ XOR gates. Since each nonzero term $P$ contributes to $(A+P)x$ can be calculated with one operation (by adding the relevant original input variable to the relevant part of $Ax$), it suffices to show that $Ax$ can be calculated with $\lambda$ operations. 

Note that each $g_i$ has the effect of taking two terms $(u,v)$ out of the original $x$ and replacing them with one of the following (or its reverse order): $(u,v)$, $(u, u+v)$, or $(v, u+v)$. Note that reversing the order of two terms in $x$ is cost-free when it comes to additive circuit complexity $L$, and that each of these three operations can be computed with at most 1 XOR gate. Thus $L(A) \leq \lambda$. 

Note that each row of $P$ contains exactly one 1, so once the outputs $Ax$ can be computed, we can add the corresponding input variable to each output using just one XOR gate. Thus, $L(A+P) \leq \lambda + n = \lambda_n(A) + n$

\end{proof}






Now we attempt to apply Sergeev's example to construct an invertible matrix in our situation. Set $q=2^p, N=q-1, s=  \lceil N^{1/3}\rceil,$  and $m=N+ps$. 

We would like to use the same $N \times sp$ matrix $U$ and consider a matrix of the form $$\wamatrix{{{U,I_N},{I_{sp},U^T}}} = U \boxplus U^T + \wamatrix{{{0, I_N},{I_{sp},0}}},$$ where the matrix $U$ is the one from Sergeev's example. The only subtlety is that we must choose $U$ so that our matrix $\wamatrix{{{U,I_N},{I_{sp},U^T}}}$ is invertible.

Note that $\wamatrix{{{U,I_N},{I_{sp},U^T}}}$ is invertible if and only if there is no nonzero $\wamatrix{{{x},{y}}}$ for $x \in (\Bbb{Z}_2)^{ps}$ and $y \in (\Bbb{Z}_2)^{N}$ such that $\wamatrix{{{U,I_N},{I_{sp},U^T}}} \wamatrix{{{x},{y}}} = 0$. But these equations can be re-written as 
\begin{align*}
    Ux + y &= 0 \\
    x + U^T y &= 0.
\end{align*}
But this is equivalent to
\begin{align*}
    y &= Ux \\
    x + U^T Ux &= 0.
\end{align*}
So to show that there is no non-zero solution it suffices to show that $U^TU = 0$. We now show this holds for large $n$ for $s = \lceil n^{1/3} \rceil$. 

\begin{lemma}
For $U$ taken as above (as in Sergeev's example), for sufficiently large $p$,
we have that $U^TU = 0$. 
\label{UT_U_zero}
\end{lemma}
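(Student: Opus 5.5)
The approach is to compute $U^TU$ through the trace form of $\mathbb F_q/\mathbb Z_2$ and reduce every entry to a power sum over $\mathbb F_q^\times$, which vanishes unless the exponent is divisible by $q-1$.

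\textbf{Step 1: reduce to a dual basis.} Changing the fixed basis of $\mathbb F_q$ over $\mathbb Z_2$ replaces $U$ by $UT$ for an invertible block-diagonal matrix $T$. This replaces $U^TU$ by $T^TU^TUT$, so whether $U^TU=0$ does not depend on the basis. I would therefore pick a basis $\gamma_1,\dots,\gamma_p$ with trace-dual basis $\beta_1,\dots,\beta_p$, so that the $l$-th coordinate of $z\in\mathbb F_q$ is $\Tr(\beta_l z)$, where $\Tr(z)=\sum_{a=0}^{p-1}z^{2^a}$.

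\textbf{Step 2: write each entry as power sums.} Since $N=q-1$, the $\alpha_i$ run over all of $\mathbb F_q^\times$. The columns of $U$ are indexed by pairs $(j,l)$ with $1\le j\le s$ and $1\le l\le p$. Hence
\[
(U^TU)_{(j,l),(j',l')}=\sum_{\alpha\in\mathbb F_q^\times}\Tr(\beta_l\alpha^j)\Tr(\beta_{l'}\alpha^{j'})
=\sum_{a,b=0}^{p-1}\beta_l^{2^a}\beta_{l'}^{2^b}\sum_{\alpha\in\mathbb F_q^\times}\alpha^{j2^a+j'2^b}.
\]
This is an identity in $\mathbb F_q$, and the left side lies in $\mathbb Z_2$. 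The standard fact is that $\sum_{\alpha\in\mathbb F_q^\times}\alpha^e=0$ unless $(q-1)\mid e$, using a generator of the cyclic group $\mathbb F_q^\times$. So it suffices to show that no exponent $e=j2^a+j'2^b$ is divisible by $2^p-1$.

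\textbf{Step 3: the exponent count (main obstacle).} Multiplication by $2^{-b}$ is invertible modulo $2^p-1$, so it suffices to treat $e'=j2^c+j'$ with $c=a-b \bmod p$. Modulo $2^p-1$, multiplying by $2^c$ cyclically rotates the $p$-bit binary expansion, so binary weight is preserved. Adding two residues modulo $2^p-1$ cannot increase weight beyond the sum of the weights: ordinary carries only lower weight, and the end-around carry is a carry of the same kind. Hence the reduced representative of $e'$ in $\{1,\dots,2^p-1\}$ has weight at most $\operatorname{wt}(j)+\operatorname{wt}(j')\le 2(\log_2 s+1)$.

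The representative is nonzero because $e'>0$. The only positive representative of $0 \bmod (2^p-1)$ is $2^p-1$, which has weight $p$. Since $s=\lceil (2^p-1)^{1/3}\rceil$ gives $\log_2 s\approx p/3$, we get $2(\log_2 s+1)\approx 2p/3+2<p$ for sufficiently large $p$. So $e'$ is never divisible by $2^p-1$, every inner sum vanishes, and therefore $U^TU=0$.

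The delicate part is making the weight-subadditivity argument modulo $2^p-1$ airtight, including the end-around carry. As a fallback, one can bound $j2^c$ after rotation directly and check that $j2^c+j'$ cannot produce the all-ones word, since $j'$ occupies only about $p/3$ low-order bits.
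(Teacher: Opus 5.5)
Your proof is correct. The setup is the same as the paper's: coordinate functionals written as $\Tr(\beta\,\cdot)$, expansion of the trace into Frobenius conjugates, and vanishing of $\sum_{\alpha\in\mathbb F_q^\times}\alpha^e$ unless $(2^p-1)\mid e$. Your basis-change step is harmless but unnecessary, since every basis already has a trace-dual basis.

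The two arguments differ at the decisive step, ruling out $2^p-1\mid 2^aj+2^bj'$.

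\textbf{The paper's route.} It argues by size. It treats $a=b$ separately. For $a\neq b$ it splits on whether the shift $|a-b|$ is at most $p/2$, in which case $j+2^{|a-b|}j'$ is a positive integer below $2^p-1$. Otherwise, multiplying by a power of $2$ reduces to the first case with the roles of $j,j'$ swapped.

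\textbf{Your route.} You use the one's-complement structure of $\mathbb Z/(2^p-1)$. Doubling is a cyclic rotation, so Hamming weight is invariant, and weight is subadditive. A multiple of $2^p-1$ has positive representative of weight $p$, so it would force $\operatorname{wt}(j)+\operatorname{wt}(j')\ge p$.

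\textbf{What each buys.} Your version is uniform: there is no case split, and the $a\leftrightarrow b$ symmetry is built in. It gives an explicit threshold (already $p\ge 12$ suffices). It also shows that $U^TU=0$ persists for any set of exponents of binary weight below $p/2$, regardless of their magnitude. The paper's version needs nothing beyond elementary size comparisons.

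\textbf{The step you flagged.} It closes without tracking end-around carries. The integer $j2^c+j'$ already has weight at most $\operatorname{wt}(j)+\operatorname{wt}(j')$ by ordinary binary addition. Now take $n=2^ph+\ell$ with $h\ge 1$ and $0\le \ell<2^p$. The move $n\mapsto h+\ell$ preserves $n \bmod (2^p-1)$, strictly decreases $n$, keeps it positive, and does not increase weight. Iterating reaches the representative in $\{1,\dots,2^p-1\}$ with weight no larger than that of $j2^c+j'$, which is exactly what your argument needs.
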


\begin{proof} For $i = 1, 2, \cdots, p$,   let $\phi_i:\Bbb{F}_{2^p} \to \Bbb{Z}_2$ denote the coordinate maps for the $p$ basis vectors of $\Bbb{F}_{2^p}$ over $\Bbb{Z}_2$. Then the rows of $U^T$ (which are transposes of the columns of $U$) have the form 
$$(\phi_i(\alpha_1^t), \phi_i(\alpha_2^t), \ldots \phi_i(\alpha_N^t)  ),$$
for $i \in \{1, 2, \ldots, p\}$ and $t \in \{1, 2, \ldots s\}$ (there are $ps$ rows in $U^T$). 

Thus, we wish to show that for any $i,j \in \{1, 2, \ldots, p\}$ and $t,u \in \{1, 2, \ldots s\}$, 
\begin{equation}
    \sum_{\alpha \in \Bbb{F}_{2^p}^\times} \phi_i(\alpha^t) \phi_j(\alpha^u)
    \label{dot_product_in_terms_of_phi_i}
\end{equation}
is zero.

Recall that for finite field extensions of finite fields, every linear map from the extension to the base field can be written in terms of the field trace. In particular, in our case we can write 
$\phi_i(x) = \Tr(\beta x)$ and $\phi_j (x) = \Tr(\gamma x)$ for $\beta, \gamma \in \Bbb{F}_{2^p}^\times$

Thus the left hand side of \eqref{dot_product_in_terms_of_phi_i} can be rewritten as
\begin{equation}
    \sum_{\alpha \in \Bbb{F}_{2^p}^\times} \Tr(\beta\alpha^t) \Tr(\gamma\alpha^u)
    \label{dot_product_in_terms_of_Tr}
\end{equation}

Since finite extensions of finite fields are Galois extensions, the field trace can be written as the sum of Galois conjugates of $x$, which in this case is 
$$\Tr(x) = x + x^2 + x^4 + \cdots x^{2^{p-1}}.$$

Thus we can rewrite \eqref{dot_product_in_terms_of_Tr} as
\begin{align*}
    \sum_{\alpha \in \Bbb{F}_{2^p}^\times} \sum_{i,j \in \{0, 1, \ldots p-1\}} (\beta\alpha^t)^{2^i} (\gamma\alpha^u)^{2^j} & =\sum_{i,j \in \{0, 1, \ldots p-1\}}  \sum_{\alpha \in \Bbb{F}_{2^p}^\times} (\beta\alpha^t)^{2^i} (\gamma\alpha^u)^{2^j} \\
    & = \sum_{i,j \in \{0, 1, \ldots p-1\}}  \beta^{2^i}\gamma^{2^j}\sum_{\alpha \in \Bbb{F}_{2^p}^\times} \alpha^{2^it+2^ju}
    \label{dot_product_in_terms_of_Tr}
\end{align*}

It now suffices to show that for $i, j \in \{0, 1, \ldots , p-1\}$ and $t, u \leq \lceil N^{1/3} \rceil$,
\[\sum_{\alpha \in \Bbb{F}_{2^p}^\times} \alpha^{2^it+2^ju} = 0.\]
Since $\Bbb{F}_{2^p}^\times$ is cyclic of order $2^p-1$, we have that $$\sum_{\alpha \in \Bbb{F}_{2^p}^\times} \alpha^{2^it+2^ju} = 0$$ 
unless $q-1=2^p-1 | 2^it+2^ju$. 

Note that if $i = j$, then $2^p-1 | 2^it+2^ju$ if and only if $2^p-1| t+u$, but $0<t+u < 2^p-1$ for sufficiently large $p$, so this is impossible. We may thus assume without loss of generality that $i<j$. Then we divide into two cases:

\noindent\textbf{Case 1: $j-i\leq p/2$:}
Note that $2^p-1|2^it+2^ju$ if and only if $2^p-1|t+2^{j-i}u$. However, for large $p$, $t,u < (2^p-1)^{0.35}$ and $2^{j-i}<(2^p-1)^{0.55}$, so $t+2^{j-i}u<(2^p-1)$, and thus $2^p-1$ cannot divide $t+2^{j-i}u$.

\noindent\textbf{Case 2: $j-i > p/2$:}
Note that $2^p-1|2^it+2^ju$ if and only if $2^p-1|2^{p-j+i}t+2^{p}u$, but $2^{p-j+i}t+2^{p}u \equiv 2^{p-j+i}t+u$ and now $p-j+i \leq p/2$, so we have reduced this to the previous case.

This completes the proof of \eqref{dot_product_in_terms_of_phi_i}, and thus that $U^TU = 0$.
\end{proof}

\begin{theorem}
With $N = 2^p-1$, $s = \lceil N^{1/3} \rceil$, $m = N+ps$, and $U$ as above, the matrix $A = \wamatrix{{{U, I_N},{I_{ps}, U^T}}}$ is invertible for sufficiently large $p$, i.e. is an element in $\GL(m, \mathbb{Z}_2)$ and has $\lambda_m(A) \geq 4m-\text{o}(m)$ and $\ell_m(A) \geq 4m-\text{o}(m)$.
\end{theorem}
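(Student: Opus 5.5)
The plan is to obtain invertibility from Lemma~\ref{UT_U_zero}, and the lower bound by writing $A$ as $U\boxplus U^T$ plus a permutation matrix, then combining Lemma~\ref{comparison_lemma} with Theorem~\ref{Sergeev_theorem}. The inequality for $\ell_m$ will follow from $\lambda_m\le \ell_m$.

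\textbf{Invertibility.} By the reduction preceding Lemma~\ref{UT_U_zero}, a kernel vector $\binom{x}{y}$ of $A$ satisfies $y=Ux$ and $x+U^TUx=0$. Since $N=2^p-1$, the $\alpha_i$ run over all of $\mathbb F_{2^p}^\times$, which is exactly the setting of Lemma~\ref{UT_U_zero}. Hence for large $p$ we have $U^TU=0$. This forces $x=0$ and then $y=0$, so $A\in\GL(m,\mathbb Z_2)$.

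\textbf{Reduction to an additive-complexity bound.} Put
\[
P=\begin{pmatrix}0&I_N\\ I_{ps}&0\end{pmatrix}.
\]
This is a permutation matrix and $A=U\boxplus U^T+P$. Over $\mathbb Z_2$ we have $P+P=0$, so $A+P=U\boxplus U^T$. Lemma~\ref{comparison_lemma} then gives
\[
\lambda_m(A)\ \ge\ L(U\boxplus U^T)-m .
\]

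\textbf{Applying Sergeev's bound.} I apply Theorem~\ref{Sergeev_theorem} with $B=U$, $a=N$, $b=ps$ and $k=\lfloor s/2\rfloor$. The hypotheses hold as follows.
\begin{enumerate}
\item $a>b$, since $ps\asymp N^{1/3}\log N$.
\item No row of $U$ has exactly one $1$. Each of $\alpha_i,\dots,\alpha_i^s$ is nonzero, so each contributes at least one $1$, giving row weight at least $s\ge 2$.
\item Any $2k\le s$ rows are linearly independent over $\mathbb Z_2$. A $\mathbb Z_2$-dependence among rows is also an $\mathbb F_q$-dependence. The corresponding rows form part of an $s\times s$ matrix $(\alpha_{i_r}^{t})$, which is a nonzero diagonal scaling of a Vandermonde matrix in distinct elements, hence nonsingular.
\end{enumerate}
Since $k\asymp N^{1/3}\gg\log N$, the theorem yields
\[
L(U\boxplus U^T)\ \ge\ 5N-O\!\left(\frac{N\log N}{N^{1/3}}\right)-3ps\ =\ 5N-o(N).
\]

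\textbf{Conclusion.} Because $m=N+ps=N+o(N)$, we get
\[
\lambda_m(A)\ \ge\ 5N-o(N)-m\ =\ 4m-o(m),
\]
and $\ell_m(A)\ge\lambda_m(A)$ gives the same bound for $\ell_m$.

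The step needing the most care is not any new argument. It is checking that the parameters chosen here ($N=q-1$, $s=\lceil N^{1/3}\rceil$, $p=\log_2(N+1)$) simultaneously satisfy Lemma~\ref{UT_U_zero} and keep both error terms $o(N)$: the term $O(N\log N/k)$ and the term $3ps$. Both are polynomially smaller than $N$, so I expect this to go through.
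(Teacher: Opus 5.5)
Your proposal is correct and follows the paper's proof essentially step for step. Invertibility comes from $U^TU=0$ (Lemma~\ref{UT_U_zero}); the identity $A+P=U\boxplus U^T$ together with Lemma~\ref{comparison_lemma} gives $\lambda_m(A)\ge L(U\boxplus U^T)-m$; Sergeev's bound applies with $a=N$, $b=ps$; and the conclusion uses $m=N+o(N)$ and $\lambda_m\le\ell_m$. The only difference is that you verify the hypotheses of Theorem~\ref{Sergeev_theorem} (that $a>b$, the row-weight condition, and $2k$-wise independence with $k=\lfloor s/2\rfloor$) explicitly, where the paper defers this to Sergeev's example.
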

\begin{proof}
By Lemma \ref{UT_U_zero} we have that $A$ is invertible. Let $P = \wamatrix{{{0, I_N}, {I_{ps}, 0}}}$, which is a permutation matrix. Then $A+P = \wamatrix{{{U, 0},{0, U^T}}} = U \boxplus U^T$. 

As we explained above, by Theorem \ref{Sergeev_theorem}, $L(A+P) = L(U \boxplus U^T) \geq 5N-\text{o}(N)$. Thus, by Lemma \ref{comparison_lemma}, 
\[\lambda_m(A) \geq L(A+P) - m \geq 5N-\text{o}(N) -(N+ps) = 4N-\text{o}(N).\]
Since $ps = \text{o}(N)$, this is $4m-\text{o}(m)$. Since $\lambda_m(A) \leq \ell_m(A)$, the same lower bound holds for $\ell_m(A)$. 
\end{proof}



\section{A complexity upper bound for permutations generated by local logic gates }\label{sec:pre}

$$\\$$

In this section, we show that the permutation group generated by local logic gates on  the set of binary strings of length $n$
can be identified with the group of invertible affine transformations of the vector space $  \Bbb{Z}_2 ^n   $ over the field
 $  \Bbb{Z}_2$. We also prove that each local logic gate is an affine transformation whose linear part is either the identity, an elementary matrix, or a product of two elementary matrices, and whose translation part is supported on at most two coordinates.
  As a consequence, we obtain $\frac{2n^2}{\mbox{log}_2 n -1} + 2n \mbox{ log}_2 n+ \frac{3n}{2} +1$ as a complexity upper bound 
 for all permutations generated by local logic gates on the set of binary strings of length $n$.

We identify the set of binary strings of length $n$ with the vector space
$  \Bbb{Z}_2 ^n   $ over the field $  \Bbb{Z}_2$. 
We write a vector $v$ in $  \Bbb{Z}_2 ^n   $ in the following column form:

 $$  v =\begin{pmatrix}
c_1\\
c_2\\
\vdots \\
c_n
\end{pmatrix},   $$ 
where $c_i \in   \Bbb{Z}_2.$

An affine transformation $T: \Bbb{Z}_2 ^n   \rightarrow \Bbb{Z}_2 ^n$
can be written as:

$$Tv =Av + v_0$$
for all $v\in  \Bbb{Z}_2 ^n ,$
where $A$ is a linear transformation from $\Bbb{Z}_2 ^n $ to $\Bbb{Z}_2 ^n $
and $v_0$ is a fixed vector in $ \Bbb{Z}_2 ^n $.

An affine transformation is the composition of a linear transformation $A$ with a translation by a vector $v_0$. The linear transformation $A$ is called the linear part of the affine transformation $T$.
An affine  transformation $T$ is a bijection if and only if the linear transformation is invertible. When an affine transformation $T$ is a bijection, we say that $T$ is invertible.

$$\\$$

We first consider the case of $n=2$. The general case is a consequence of the $2$-dimensional case.

\begin{lemma} When $n=2$, the group of all invertible affine transformations is equal to the group of all permutations on
the set of binary strings of length $2.$
\label{lemma3.1}
\end{lemma}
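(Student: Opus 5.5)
The plan is to prove the lemma by a counting argument combined with an injectivity observation. Let $\mathrm{Aff}(2)$ denote the set of invertible affine transformations $v\mapsto Av+v_0$ of $\mathbb Z_2^2$. Each such transformation is a bijection of the four-element set $\mathbb Z_2^2$, so $\mathrm{Aff}(2)$ is contained in $S_4$, the full symmetric group on the binary strings of length $2$. Since $|S_4|=24$, it is enough to show that $\mathrm{Aff}(2)$ also has $24$ elements.

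For the count, first note that the pair $(A,v_0)$ with $A\in \GL(2,\mathbb Z_2)$ and $v_0\in\mathbb Z_2^2$ is uniquely determined by the map $T$. Indeed, $v_0=T(0)$, and then $A e_i = T(e_i)-T(0)$ for $i=1,2$. Hence $|\mathrm{Aff}(2)| = |\GL(2,\mathbb Z_2)|\cdot|\mathbb Z_2^2|$. Next, $|\GL(2,\mathbb Z_2)| = (4-1)(4-2)=6$, because the first column can be any nonzero vector and the second column any vector outside the span of the first. This gives $6\cdot 4=24=|S_4|$, so the inclusion $\mathrm{Aff}(2)\subseteq S_4$ is an equality.

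I would also note that $\mathrm{Aff}(2)$ is closed under composition and inverses, with $(A,v_0)\circ(B,w_0)=(AB,Aw_0+v_0)$. So the equality holds as groups and not merely as sets.

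As an optional sanity check, one can see directly that $\mathrm{Aff}(2)$ acts transitively: the translations alone move $0$ to any point. The stabilizer of $0$ is $\GL(2,\mathbb Z_2)$, which realizes all $6$ permutations of the three nonzero vectors $e_1,e_2,e_1+e_2$, since any two of these form a basis. This recovers $4\cdot 6=24$.

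There is no real obstacle here. The only step that needs care is the injectivity of $(A,v_0)\mapsto T$, which justifies multiplying the two counts.
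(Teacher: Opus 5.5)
Your proof is correct and follows the same route as the paper: you show the affine group sits inside the symmetric group on $\mathbb Z_2^2$, then count $6\cdot 4=24$ elements. The only differences are that the paper lists the six elements of $\GL(2,\mathbb Z_2)$ explicitly rather than counting columns, and your recovery of $(A,v_0)$ from $T$ via $v_0=T(0)$ and $Ae_i=T(e_i)-T(0)$ makes explicit the injectivity step that the paper uses implicitly when it multiplies the two counts.
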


\noindent{Proof of Lemma \ref{lemma3.1}}:
First, we observe that the group of all invertible affine transformations is a subgroup of the group of all permutations on the set of binary strings of length $2$.
Since the group of all permutations on the set of binary strings of length $2$
 has $4!=24$ elements, it suffices to prove that the group of all invertible affine transformations has $24$ elements.

The following six $2$ by $2$ matrices is a complete list of invertible matrices over the field $\Bbb{Z}_2$:

\[
  \left(\begin{array}{cc}
    1&0\\
    0&1
  \end{array}\right), ~~~
  \left(\begin{array}{cc}
    0&1\\
    1&0
  \end{array}\right), ~~~
  \left(\begin{array}{cc}
    1&1\\
    0&1
  \end{array}\right),~~~
  \left(\begin{array}{cc}
    1&0\\
    1&1
  \end{array}\right),~~~
  \left(\begin{array}{cc}
   0&1\\
   1&1
  \end{array}\right),~~~
  \left(\begin{array}{cc}
  1&1\\
  1&0
  \end{array}\right).
\]

Notice that we have four choices of the vector $v_0$ in the definition of the affine transformation $T$. As such, there are a total of $24$ invertible affine transformations. \qed

$$\\$$

We remark that the six matrices in the proof of the above lemma
will be used to construct linear logic gates. Notice that the two matrices in the middle are the CNOT gates. The CNOT gates generate the other four matrices
as follows:

\[ \left(\begin{array}{cc}
    1&0\\
    0&1
  \end{array}\right) 
  = \left(\begin{array}{cc}
    0&1\\
    1&0
  \end{array}\right)^2,
 \]
 
 \[ \left(\begin{array}{cc}
    0&1\\
    1&0
  \end{array}\right)
  =  \left(\begin{array}{cc}
    1&0\\
    1&1
  \end{array}\right)
  \left(\begin{array}{cc}
    1&1\\
    0&1
  \end{array}\right)
   \left(\begin{array}{cc}
    1&0\\
    1&1
  \end{array}\right),\]
 \[ \left(\begin{array}{cc}
    0&1\\
    1&1
  \end{array}\right)
  =\left(\begin{array}{cc}
    1&0\\
    1&1
  \end{array}\right)
  \left(\begin{array}{cc}
    0&1\\
    1&0
  \end{array}\right).
  \]
  
$$\\$$

We now define the concept of  local logic gates. 

Let $\sigma $ be a permutation on the set of binary strings of length $2,$
 identified as the vector space  $  \Bbb{Z}_2 ^2  $.
For any $1\leq i<j\leq n$, $\sigma $
induces a permutation $\sigma_{i,j}$ on the set of binary strings of length $n$,
  the vector space $  \Bbb{Z}_2 ^n  $, as follows:

$$\sigma_{i,j}( c_1, \cdots, c_{i-1}, c_i, c_{i+1}, \cdots, c_{j-1}, c_j,  c_{j+1},\cdots, c_n) $$ $$=
(c_1, \cdots, c_{i-1}, p_1(\sigma (c_i, c_j)), c_{i+1},\cdots, c_{j-1}, p_2 (\sigma (c_i, c_j)), c_{j+1}, \cdots, c_n),$$ 

where $p_1 (a, b)=a$ and $p_2(a, b)=b$ for all $(a, b)\in  \Bbb{Z}_2 ^2  $.  

$$\\$$

The following concept of local logic gate was introduced to the authors by Professor Michael Freedman.

\begin{definition} For any permutation $\sigma$ of the set of binary strings of length 2 and any pair $1\leq i<j\leq n$, the induced permutation $\sigma_{i,j}$ on the set of binary strings of length $n$
is called a local logic gate.
\label{definition3.2}
\end{definition}

$$\\$$

The following theorem is a consequence of Lemma \ref{lemma3.1}.

\begin{theorem}

Assume that $n\ge 2$. The permutation group $G_n$ generated by all local logic gates on the set of binary strings of length $n$ is equal to the group of all invertible affine transformations on the vector space $\Bbb{Z}_2 ^n $ over the field $\Bbb{Z}_2$.
Furthermore, every local logic gate acting on coordinates $i$ and $j$ has the form
\[
\begin{pmatrix}x_i\\x_j\end{pmatrix}
\longmapsto
B\begin{pmatrix}x_i\\x_j\end{pmatrix}+b,
\]
where $B\in GL(2,\mathbb Z_2)$ and $b\in\mathbb Z_2^2$. Its linear part is
either the identity, an elementary matrix, or a product of two elementary
matrices, and its translation part is a translation  on at most two coordinates.

\label{theorem3.3}
\end{theorem}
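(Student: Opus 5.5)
We prove the two inclusions $G_n\subseteq \mathrm{Aff}(\mathbb Z_2^n)$ and $\mathrm{Aff}(\mathbb Z_2^n)\subseteq G_n$, and then read off the normal form of a single gate from Lemma \ref{lemma3.1}.

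\textbf{Step 1: every gate is affine.} Let $\sigma$ be a permutation of $\mathbb Z_2^2$. By Lemma \ref{lemma3.1}, $\sigma(w)=Bw+b$ for some $B\in GL(2,\mathbb Z_2)$ and $b\in\mathbb Z_2^2$. Hence $\sigma_{i,j}$ fixes every coordinate other than $i$ and $j$, and sends $(x_i,x_j)^T$ to $B(x_i,x_j)^T+b$. This is an affine map of $\mathbb Z_2^n$. Its linear part is the matrix $p_{ij}(B)$, which is invertible, and its translation vector is supported on coordinates $i,j$. This gives the displayed normal form.

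For the claim about the linear part, go through the six matrices listed in the proof of Lemma \ref{lemma3.1}:
\begin{itemize}
\item the identity;
\item the swap, which is a Type I elementary matrix;
\item the two transvections;
\item the remaining two matrices, which factor as a transvection times the swap, as displayed after Lemma \ref{lemma3.1}, together with the analogous factorization $\wamatrix{{{1,1},{1,0}}}=\wamatrix{{{1,1},{0,1}}}\wamatrix{{{0,1},{1,0}}}$, which is checked directly.
\end{itemize}
Embedding these via $p_{ij}$ shows that the linear part of any gate is the identity, an elementary matrix, or a product of two elementary matrices. Since invertible affine maps form a group under composition, $G_n\subseteq\mathrm{Aff}(\mathbb Z_2^n)$.

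\textbf{Step 2: every invertible affine map lies in $G_n$.} Write $T(v)=Av+v_0$, so $T=\tau_{v_0}\circ A$, where $\tau_{v_0}$ is translation by $v_0$. It suffices to show that all translations and all invertible linear maps lie in $G_n$.
\begin{itemize}
\item \emph{Translations.} Write $v_0=\sum_k c_k e_k$. The translation $\tau_{e_k}$ is the local gate induced by $\sigma(w)=w+(1,0)^T$ on a pair $(k,j)$, or $w+(0,1)^T$ if $k$ is the larger index. This uses $n\ge2$ to guarantee that some partner index exists. Translations commute, so $\tau_{v_0}$ is the product of the $\tau_{e_k}$ with $c_k=1$.
\item \emph{Linear maps.} $GL(n,\mathbb Z_2)$ is generated by elementary matrices; over $\mathbb Z_2$, Gaussian elimination uses only swaps and row additions. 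Each of these is $p_{ij}(B)$ for a suitable $B\in GL(2,\mathbb Z_2)$, hence is the local gate induced by $w\mapsto Bw$.
\end{itemize}

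\textbf{Main obstacle.} No step is deep. The point needing care is the bookkeeping that a gate $\sigma_{i,j}$ acts on $\mathbb Z_2^n$ as exactly the affine map with linear part $p_{ij}(B)$ and translation $b$ placed in coordinates $i,j$. This must respect the ordering $i<j$ in the definition, which determines which of $p_1,p_2$ lands where. I would state this identification once, explicitly, and use it in both inclusions.
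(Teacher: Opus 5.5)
Your proposal is correct and follows the paper's proof essentially step for step. Lemma~\ref{lemma3.1} gives $G_n\subseteq\operatorname{AGL}(n,2)$. The reverse inclusion comes from generating $GL(n,\mathbb Z_2)$ by swaps and row additions and writing $\tau_{v_0}$ as a product of gate-induced translations, and the normal form comes from the case check on the six elements of $GL(2,\mathbb Z_2)$. The only difference is cosmetic: you split $v_0$ into single-coordinate translations, whereas the paper uses pieces supported on at most two coordinates.
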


We remark that the linear logic gates generate the group of invertible linear transformations.

$$\\$$

\noindent{Proof of Theorem  \ref{theorem3.3}}:
Let $\operatorname{AGL}(n,2)$ be the group of invertible affine transformations. 
By Lemma \ref{lemma3.1}, every permutation of $\mathbb Z_2^2$ is an invertible affine
transformation. Therefore, every local logic gate on $\mathbb Z_2^n$ is an
invertible affine transformation supported on two coordinates. Hence:
\[
G_n\subseteq \operatorname{AGL}(n,2).
\]

Conversely, let
\[
T(v)=Av+v_0
\]
be an invertible affine transformation of $\mathbb Z_2^n$.

Since $A\in GL(n,\mathbb Z_2)$, it can be written as a product of row swaps
and row additions. A row swap on coordinates $i$ and $j$ is induced by the
permutation
\[
(a,b)\longmapsto (b,a)
\]
of $\mathbb Z_2^2$, and a row addition is induced by one of the permutations
\[
(a,b)\longmapsto (a+b,b),
\qquad
(a,b)\longmapsto (a,a+b).
\]
Thus every row swap and every row addition is a local logic gate.

It remains to consider the translation by $v_0$. Write
\[
v_0=w_1+\cdots+w_r,
\]
where each $w_\nu\in\mathbb Z_2^n$ is  a vector supported on at most two coordinates.
Then
\[
\tau_{v_0}
=
\tau_{w_1}\cdots \tau_{w_r}.
\]
Each $\tau_{w_\nu}$ is induced by a translation of $\mathbb Z_2^2$, and hence
is a local logic gate.

Hence both the linear map $A$ and the translation $\tau_{v_0}$ belong to
$G_n$. Since
$
T=\tau_{v_0} A,
$
we obtain
$
\operatorname{AGL}(n,2)\subseteq G_n.
$
As a consequence, we have
$
G_n=\operatorname{AGL}(n,2).
$

Finally, every local logic gate acting on coordinates $i$ and $j$ has the form
\[
\begin{pmatrix}x_i\\x_j\end{pmatrix}
\longmapsto
B\begin{pmatrix}x_i\\x_j\end{pmatrix}+b,
\]
where $B\in GL(2,\mathbb Z_2)$ and $b\in\mathbb Z_2^2$.
Among the six elements of $GL(2,\mathbb Z_2)$, one is the identity, three are
elementary matrices, and the remaining two are products of two elementary
matrices. The translation vector $b$ is supported on at most two coordinates.
\qed
 
$$\\$$

\begin{theorem} All permutations in the group $G_n$ generated by local logic gates acting on the set of binary strings of length $n$ have complexity at most
\[
\frac{2n^2}{\log_2 n -1}
+2n\log_2 n
+n
+\left\lceil\frac{n}{2}\right\rceil,
\]
where the complexity of a permutation is the minimum number of local logic gates whose composition is equal to the permutation.
\label{theorem3.4}
\end{theorem}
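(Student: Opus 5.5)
By Theorem~\ref{theorem3.3}, every permutation $T\in G_n$ is an invertible affine map $T(v)=Av+v_0$ with $A\in GL(n,\mathbb Z_2)$, and it factors as $T=\tau_{v_0}A$. I will bound the complexity of the two factors separately and add the bounds. For the translation $\tau_{v_0}$, I split the support of $v_0$ (at most $n$ coordinates) into disjoint blocks of at most two coordinates each. Each block is a translation of $\mathbb Z_2^2$ on a pair $(i,j)$, and by Lemma~\ref{lemma3.1} that translation is a local logic gate. This realizes $\tau_{v_0}$ with at most $\lceil n/2\rceil$ gates, which is exactly the last term of the bound. The remaining budget, $\frac{2n^2}{\log_2 n-1}+2n\log_2 n+n$, must then cover the linear part $A$.

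For $A$ I will use the Patel--Markov--Hayes block Gaussian elimination \cite{PatelMarkovHayes2008}. Every CNOT and every row swap is a local logic gate, as noted in the proof of Theorem~\ref{theorem3.3}, so it suffices to reduce $A$ to the identity with that many elementary row operations. Fix a block width $m\approx \log_2 n-1$ and partition the columns into about $n/m$ sections of width $m$. The reduction proceeds in three phases.

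\textbf{Lower triangulation, section by section.} In the current section, among the rows below the diagonal, any two rows whose restrictions to the $m$ section columns coincide are combined by a single CNOT that zeroes one of the patterns. Since there are at most $2^m$ patterns, this costs at most $n$ operations per section, up to lower-order terms. Then the section is cleared column by column. Each column needs at most one operation to place a $1$ on the diagonal (either a swap or an addition from below). After the duplicate removal only about $2^m$ distinct nonzero patterns remain below, so clearing a column costs about $2^m$ further operations.

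\textbf{Upper triangulation.} I apply the same procedure to the transpose; equivalently, I process the columns from the right to clear the entries above the diagonal.

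\textbf{Counting.} There are two phases and about $n/m$ sections, each section costing about $n$ for duplicate removal, so the duplicate-removal total is about $2\cdot n\cdot n/m$. With $m=\log_2 n-1$ this is $\frac{2n^2}{\log_2 n-1}$. The per-column cleanup costs about $2^m\cdot n$ over the two phases. Choosing $m$ so that $2^m\le \tfrac12 n$ keeps this at $O(n)$ per phase, and the pivot-placement operations contribute at most $n$ in total, which is the $+n$ term. The residual rounding error coming from the number of sections, $\lceil n/m\rceil$ rather than $n/m$, together with the $m$-dependent overhead of partial sections, I will absorb into the $2n\log_2 n$ term.

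The main obstacle will be the careful bookkeeping that makes the constants come out exactly as stated rather than merely up to $O(\cdot)$. This includes handling the case where $m$ does not divide $n$, counting pivot swaps only once, and ensuring the per-column clearing cost stays within $2n\log_2 n$ over both phases. I will first try $m=\lfloor\log_2 n\rfloor-1$ and bound the per-section cleanup by $m\cdot 2^m\le m\,n/2$. This gives $n\log_2 n$ per phase over all sections in the worst case, and hence $2n\log_2 n$ in total.
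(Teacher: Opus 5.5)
Your outer structure is the same as the paper's. You write $T=\tau_{v_0}A$, spend at most $\lceil n/2\rceil$ gates on the translation, and use that every row swap or CNOT is a local logic gate. The difference is that the paper does not derive the linear bound $\frac{2n^2}{\log_2 n-1}+2n\log_2 n+n$; it quotes it from Section 3 of its reference [13]. Your self-contained derivation of that bound via Patel--Markov--Hayes elimination fails in the counting.

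In a section of width $m$, clearing the $m$ columns after duplicate removal can cost about $m\,2^{m-1}$ operations. If most nonzero sub-row patterns occur, each column still has roughly $2^{m-1}$ ones below the pivot. There are about $n/m$ sections, so the clean-up is of order $n\,2^m$ per phase. With $2^m\approx n/2$ this is of order $n^2$, not $O(n)$ as you claim. Your final estimate ($m\cdot 2^m\le mn/2$ per section, hence $n\log_2 n$ per phase) drops the factor $n/m$ for the number of sections; it actually gives $n^2/2$.

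Changing $m$ does not rescue the argument. Over both phases the PMH accounting has the shape $\frac{2n^2}{m}+\Theta(n2^m)$. Take $m=\log_2 n-1-\delta$. The first term then exceeds $\frac{2n^2}{\log_2 n-1}$ by about $2n^2\delta/(\log_2 n)^2$. That excess fits in the $2n\log_2 n$ slack only when $\delta=o(1)$ or $\delta<0$. In either case $2^m$ is at least about $n/2$, and the second term is quadratic.

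Your integer choice $m=\lfloor\log_2 n\rfloor-1$ already overshoots the leading term by about $2n^2/(\log_2 n)^2$ when $n$ is just below a power of $2$. This cannot be absorbed into $2n\log_2 n$. Optimizing $m$ in this trade-off yields only $\frac{2n^2}{\log_2 n}\bigl(1+\Theta(\frac{\log\log n}{\log n})\bigr)$, which is asymptotically larger than the stated formula.

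So your route proves an $O(n^2/\log n)$ bound, indeed $(2+o(1))n^2/\log_2 n$, but not the explicit inequality in the theorem. To get the theorem as stated, you need to import the linear bound from the literature as the paper does, or give a sharper elimination scheme than PMH.
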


$$\\$$

\noindent{Proof of Theorem \ref{theorem3.4}:} By Section 3 of [13], all invertible $n$ by $n$ matrices over the field $ \Bbb{Z}_2$ is a product of at most
$\frac{2n^2}{\mbox{log}_2 n -1} + 2n \mbox{ log}_2 n+ n$  elementary matrices.
Observe that any translation is a product of at most $ \lceil \frac{n}{2} \rceil $ local logic gates. Theorem \ref{theorem3.4} now follows from Theorem \ref{theorem3.3}. \qed

$$\\$$

\section{A superlinear lower bound for the complexity of almost all permutations in $G_n$}\label{sec:pre}

$$\\$$

In this section, we prove that almost all permutations in $G_n$ have complexity at least
\[
q_n=
\left\lfloor
\frac{n^2+n-2\log_2 n}
{\log_2(12n(n-1))}
\right\rfloor
\]
with respect to the local logic gates.

$$\\$$

\begin{theorem}
The group $G_n$ has $2^n (2^n-1) (2^n-2)\cdots (2^n-2^{n-1})$ elements.
\label{theorem4.1}
\end{theorem}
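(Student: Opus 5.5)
By Theorem \ref{theorem3.3}, $G_n$ is the group $\operatorname{AGL}(n,2)$ of invertible affine transformations of $\mathbb Z_2^n$. So the plan is to count invertible affine maps $T(v)=Av+v_0$. There are three steps.

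\textbf{Step 1 (the pair $(A,v_0)$ is determined by $T$).} First I show that the map $(A,v_0)\mapsto T$ from $GL(n,\mathbb Z_2)\times\mathbb Z_2^n$ to $\operatorname{AGL}(n,2)$ is a bijection. Surjectivity holds by definition, since an affine map is invertible exactly when its linear part is. For injectivity, $T$ recovers its data: evaluating at $v=0$ gives $v_0=T(0)$, and then $Av=T(v)-T(0)$ for all $v$. Hence
\[
|G_n| = 2^n\,|GL(n,\mathbb Z_2)|.
\]

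\textbf{Step 2 (counting $GL(n,\mathbb Z_2)$).} Next I count invertible matrices column by column. A matrix is invertible iff its columns $c_1,\dots,c_n$ are linearly independent. Suppose $c_1,\dots,c_{k}$ are already chosen and independent. Their span has exactly $2^{k}$ elements, so $c_{k+1}$ can be any of the $2^n-2^{k}$ vectors outside that span. Multiplying over $k=0,\dots,n-1$ gives
\[
|GL(n,\mathbb Z_2)|=(2^n-1)(2^n-2)\cdots(2^n-2^{n-1}).
\]

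\textbf{Step 3 (conclusion).} Combining Steps 1 and 2 gives $|G_n| = 2^n\,(2^n-1)(2^n-2)\cdots(2^n-2^{n-1})$, as stated. This agrees with Lemma \ref{lemma3.1} when $n=2$: $4\cdot 3\cdot 2=24$.

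I expect no real obstacle. The only point needing care is the injectivity in Step 1, that distinct pairs $(A,v_0)$ give distinct permutations, and the evaluation at $0$ settles it.
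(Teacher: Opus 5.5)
Your proposal is correct and follows essentially the same route as the paper. Both arguments identify $G_n$ with the invertible affine maps via Theorem~\ref{theorem3.3}, multiply the $2^n$ choices of $v_0$ by $|GL(n,\mathbb Z_2)|$, and count invertible matrices by choosing each new vector outside the span of the previous ones. The only differences are cosmetic: you count columns where the paper counts rows, and you make explicit the injectivity of $(A,v_0)\mapsto T$ (via $v_0=T(0)$), which the paper leaves implicit.
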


$$\\$$

\noindent{Proof of Theorem \ref{theorem4.1}}:
Observe that there are $2^n$ choices for the vector $v_0$ in the vector space $\Bbb{Z}_2 ^n$ in the definition
of an affine transformation on the vector space $\Bbb{Z}_2^n$  in Section 2. Consequently, Theorem \ref{theorem4.1} follows from the fact that the group of all invertible linear
transformations has $(2^n-1) (2^n-2)\cdots (2^n-2^{n-1})$ elements.
This can be seen as follows. Let $A$ be an invertible linear transformation over the field  $\Bbb{Z}_2$.
The first row of $A$ has $2^n-1$ choices since we need to rule out the choice of $0$ as the first row. The second row has $2^n-2$ choices since we need to rule out the choices of $0$ and the first row. The third row has $2^n-2^2$ choices since we need to rule out the choices of all possible linear combinations of the first two rows.
Continuing like this, the last row has $2^n-2^{n-1}$ choices since we need to rule out the choices of all possible linear combinations of the first $n-1$ rows
(each such linear combination corresponds to a choice of any subset of the set of the first $n-1$ rows). 
\qed

$$\\$$

\begin{theorem}
Let
\[
q_n=
\left\lfloor
\frac{n^2+n-2\log_2 n}
{\log_2(12n(n-1))}
\right\rfloor .
\]
Then, as $n\to\infty$, almost all permutations in $G_n$ have complexity
 greater than $q_n$ with respect to the local logic gates.
\label{theorem4.2}
\end{theorem}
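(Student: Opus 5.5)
The proof is a counting argument: compare the number of permutations reachable with at most $q_n$ local logic gates to the size of $G_n$ given by Theorem \ref{theorem4.1}.

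\textbf{Counting the gates.} A local logic gate $\sigma_{i,j}$ is determined by a pair $1\le i<j\le n$ and a permutation $\sigma$ of $\mathbb Z_2^2$. There are $\binom n2=\frac{n(n-1)}{2}$ pairs and $4!=24$ choices of $\sigma$. So the set $\mathcal S$ of local logic gates satisfies
\[
|\mathcal S|\le 24\cdot \tfrac{n(n-1)}{2}=12n(n-1).
\]
The bound is an inequality because different pairs give the same identity gate.

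\textbf{Counting short words.} The identity is itself a local logic gate (take $\sigma=\mathrm{id}$). Hence any permutation of complexity at most $q_n$ can be padded with identity gates to a word of length exactly $q_n$ in $\mathcal S$. Therefore the number of elements of $G_n$ with complexity at most $q_n$ is at most
\[
|\mathcal S|^{q_n}\le (12n(n-1))^{q_n}.
\]
By the definition of $q_n$ as a floor,
\[
q_n\log_2(12n(n-1))\le n^2+n-2\log_2 n,
\]
so this count is at most $2^{n^2+n}/n^2$.

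\textbf{Bounding $|G_n|$ from below.} By Theorem \ref{theorem4.1},
\[
|G_n|=2^n\prod_{i=0}^{n-1}(2^n-2^i)=2^{n^2+n}\prod_{k=1}^{n}(1-2^{-k}).
\]
The product is bounded below by the absolute constant $c=\prod_{k\ge1}(1-2^{-k})>0$. A crude bound suffices here: $c\ge 1-\sum_k 2^{-k}\cdot\ldots$, or more simply $\prod_{k\ge 1}(1-2^{-k})\ge \tfrac14$ by direct estimation (the true value is about $0.288$). Hence $|G_n|\ge c\,2^{n^2+n}$.

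\textbf{Conclusion.} The proportion of elements of $G_n$ with complexity at most $q_n$ is at most
\[
\frac{2^{n^2+n}/n^2}{c\,2^{n^2+n}}=\frac{1}{c\,n^2}\to 0 .
\]
So almost all permutations in $G_n$ have complexity greater than $q_n$.

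The only step needing any care is the lower bound on $\prod(1-2^{-k})$. I would handle it with the elementary inequality $\prod(1-x_k)\ge 1-\sum x_k$ applied to $k\ge2$, after pulling out the factor $\tfrac12$ for $k=1$. This gives $c\ge \tfrac12\cdot\tfrac12=\tfrac14$.
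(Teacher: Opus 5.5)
Your proposal is correct and follows the paper's proof essentially step for step. Like the paper, you pad with identity gates to bound the number of permutations of complexity at most $q_n$ by $(12n(n-1))^{q_n}\le 2^{n^2+n}/n^2$, then compare this with $|G_n|\ge c\,2^{n^2+n}$ from Theorem~\ref{theorem4.1}. Your two refinements are both valid and only tighten the paper's argument: you write $|\mathcal S|\le 12n(n-1)$ instead of claiming equality (the identity and single-coordinate translations are counted for several pairs), and you make the constant explicit as $c\ge \tfrac14$ via $\prod_{k\ge2}(1-2^{-k})\ge 1-\sum_{k\ge2}2^{-k}$.
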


\noindent{Proof of Theorem \ref{theorem4.2}}:
There are
$
24\binom{n}{2}=12n(n-1)
$
choices of a local logic gate together with a pair of coordinates on which it
acts. Let
$
N_n=12n(n-1).$
Let $P_q$ denote the set of elements of $G_n$ that can be implemented by at
most $q$ local logic gates.

Since the identity permutation of $\mathbb Z_2^2$ induces the identity map on
$\mathbb Z_2^n$, the identity map is itself a local logic gate. Hence every
element of $P_q$ can be expressed as a product of exactly $q$ local logic
gates by inserting identity gates if necessary. Therefore the multiplication
map
\[
S^q\longrightarrow P_q,
\qquad
(g_1,\ldots,g_q)\mapsto g_1\cdots g_q,
\]
is surjective, where $S$ denotes the set of all local logic gates. Since
$|S|=24\binom{n}{2}=12n(n-1)$, we obtain
\[
|P_q|
\le
|S|^q
=
\bigl(12n(n-1)\bigr)^q.
\]

Hence
\[
|P_{q_n}|\le N_n^{q_n}.
\]

By the definition of $q_n$,
\[
q_n\log_2 N_n
\le n^2+n-2\log_2 n.
\]
As a consequence, we have
\[
|P_{q_n}|
\le
2^{\,n^2+n-2\log_2 n}
=
\frac{2^{n^2+n}}{n^2}.
\]

On the other hand, Theorem \ref{theorem4.1} gives
\[
|G_n|
=
2^n\prod_{j=0}^{n-1}(2^n-2^j).
\]
Factoring $2^n$ from each term in the product, we obtain
\[
\begin{aligned}
|G_n|
&=
2^n\prod_{j=0}^{n-1}
2^n\bigl(1-2^{j-n}\bigr)\\
&=
2^{n^2+n}
\prod_{j=0}^{n-1}\bigl(1-2^{j-n}\bigr)\\
&=
2^{n^2+n}
\prod_{r=1}^{n}(1-2^{-r}).
\end{aligned}
\]

The infinite product
$
\prod_{r=1}^{\infty}(1-2^{-r})
$
converges to a positive number. Hence there exists a constant $c>0$,
independent of $n$, such that
\[
\prod_{r=1}^{n}(1-2^{-r})\ge c
\]
for all $n$. As a consequence, we have
$
|G_n|\ge c\,2^{n^2+n}.$

It follows that
\[
\frac{|P_{q_n}|}{|G_n|}
\le
\frac{2^{n^2+n}/n^2}
{c\,2^{n^2+n}}
=
\frac{1}{cn^2}.
\]
Since
$
\frac{1}{cn^2}\longrightarrow 0,
$
the proportion of elements of $G_n$ that can be implemented using at most
$q_n$ local logic gates tends to zero.

Thus almost all permutations in $G_n$ have complexity  greater than
$q_n$. \qed
$$\\$$

\section{Quantum Complexity of  Affine Transformations}\label{}

$$\\$$

In this section, we estimate the quantum complexity of affine transformations
in terms of the quantum complexity of their linear parts and  translation parts.
As a consequence, we show that the complexity problem for permutations generated by local logic gates is essentially equivalent to the row reduction complexity problem of invertible matrices over the field $\Bbb{Z}_2$.
$$\\$$

\begin{definition}

(1) Let $T$ be an invertible affine transformation of $\mathbb Z_2^n$. The
\emph{quantum complexity} of $T$, denoted by $QC(T)$, is the minimum number
of local logic gates whose composition is equal to $T$.

(2) Let $A$ be an invertible linear transformation of $\mathbb Z_2^n$. The
\emph{linear quantum complexity} of $A$, denoted by $QC_L(A)$, is the minimum
number of linear local logic gates whose composition is equal to $A$.
\end{definition}

$$\\$$

\begin{lemma} If $A$ is an invertible linear transformation on the vector space $\Bbb{Z}_2^n$ over the field $\Bbb{Z}_2$,  then we have
$$QC_L(A)=QC(A).$$

\label{lemma5.2}

\end{lemma}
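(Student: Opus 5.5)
The inequality $QC(A)\le QC_L(A)$ is immediate, since every linear local logic gate is in particular a local logic gate. For the reverse inequality, take a shortest decomposition $A=T_k\cdots T_1$ into local logic gates with $k=QC(A)$. By Theorem~\ref{theorem3.3}, each $T_r$ is affine: $T_r(v)=A_rv+b_r$, where $A_r=p_{ij}(B_r)$ is a linear local logic gate and $b_r$ is supported on the same two coordinates $i,j$ on which $T_r$ acts. The plan is to discard all the translation parts and show that $A=A_k\cdots A_1$. This exhibits $A$ as a product of $k$ linear local logic gates, so $QC_L(A)\le k$.

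The key step is the composition rule for affine maps. Composing $T(v)=Av+a$ and $S(v)=Bv+b$ gives $ST(v)=BAv+(Ba+b)$, so the linear part of a composition is the product of the linear parts. Projecting onto linear parts is therefore a group homomorphism $\operatorname{AGL}(n,2)\to GL(n,\mathbb Z_2)$. Applying it to $A=T_k\cdots T_1$ gives that the linear part of $A$ equals $A_k\cdots A_1$. The linear part of the linear map $A$ is $A$ itself, and its translation part $T_k\cdots T_1(0)$ is forced to be $0$.

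It remains to check that each $A_r$ really is a linear local logic gate. Since $A_r$ acts as $B_r\in GL(2,\mathbb Z_2)$ on coordinates $i,j$ and as the identity elsewhere, it is the gate induced by the linear permutation $B_r$ of $\mathbb Z_2^2$. This holds even when $B_r$ is the identity, in which case $A_r$ is the identity gate; that is harmless, since it still counts as one gate, or can simply be dropped.

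There is no real obstacle here. The only point needing care is confirming that the linear part of each local gate is itself a gate of the allowed linear type, acting on the same pair of coordinates. This follows from the explicit form given in Theorem~\ref{theorem3.3}.
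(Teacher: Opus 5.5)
Your proposal is correct and follows essentially the same route as the paper. Both arguments use the fact that the linear part of a composition of affine maps is the product of the linear parts. Stripping the translations from a minimal decomposition of $A$ into local logic gates therefore yields a decomposition of $A$ into the same number of linear local logic gates.
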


$$\\$$

\noindent{Proof of Lemma \ref{lemma5.2}}: It is clear that $QC(A)\leq QC_L(A).$
Hence it suffices to show that $QC_L(A) \leq QC(A).$ 

Claim: If $T_0$ and $T_1$ are affine transformations defined by:
$T_0 v=A_0 v+ v_0$ and $T_1 v=A_1 v+ v_1$  with $A_0$ and $A_1$, respectively, as the linear parts of $T_0$ and $T_1$, we have 
$$(T_0T_1)_L= A_0A_1, $$ where $(T_0T_1)_L$ is the linear part of the affine transformation $T_0T_1$.

This claim follows from the following identity:

$$ (T_0T_1)v = A_0(A_1v+v_1)+v_0= A_0A_1v+(A_0v_1+v_0).$$

Let $l=QC(A).$ There exist local logic gates $\rho_1, \cdots, \rho_l$ such that
$$A=\rho_1\cdots\rho_l.$$
Let $(\rho_i)_L$ be the linear part of $\rho_i$. The assumption that $\rho_i$ is a local logic gate implies that $(\rho_i)_L$ is a linear local logic gate. 
By the above claim, we have 
$$A=(\rho_1)_L\cdots (\rho_l)_L.$$
It follows that $$QC_L(A) \leq l = QC(A).$$ \qed

By Lemma \ref{lemma5.2}, we know $QC_L(A)=QC(A)=\lambda_n(A),$ where $\lambda_n(A)$ is defined in Section 2. 
As such, the explicit $n\times n$ matrix constructed in Section 2 has a quantum complexity lower bound of $4n-\text{o}(n)$.

$$\\$$

\begin{theorem}
Let $T$ be the affine transformation as follows: $Tv=Av+v_0$ on the vector space $\Bbb{Z}_2 ^n$ with $A$ as its linear part.
We have 
$$ QC_L (A) \leq QC(T)\leq QC_L (A) + QC(\tau_{v_0}),$$
where $\tau_{v_0}$ is the translation by $v_0$.
\label{theorem5.3}
\end{theorem}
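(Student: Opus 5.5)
The two inequalities will be proved separately. For the lower bound $QC_L(A)\le QC(T)$ I would reuse the Claim from the proof of Lemma~\ref{lemma5.2}: linear parts compose multiplicatively, $(T_0T_1)_L=A_0A_1$.

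Set $l=QC(T)$ and write $T=\rho_1\cdots\rho_l$ with each $\rho_i$ a local logic gate. By Theorem~\ref{theorem3.3}, each $\rho_i$ acts on two coordinates $i,j$ as $(x_i,x_j)^T\mapsto B(x_i,x_j)^T+b$ with $B\in GL(2,\mathbb Z_2)$. Its linear part $(\rho_i)_L$ is therefore the map $(x_i,x_j)^T\mapsto B(x_i,x_j)^T$, which is a linear local logic gate. Applying the Claim repeatedly (an induction on $l$) gives
\[
A = T_L = (\rho_1)_L\cdots(\rho_l)_L .
\]
This exhibits $A$ as a product of $l$ linear local logic gates, so $QC_L(A)\le l=QC(T)$.

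For the upper bound, I would factor $T=\tau_{v_0}\circ A$, which holds because $\tau_{v_0}(Av)=Av+v_0=Tv$. Write $A$ as a product of $QC_L(A)$ linear local logic gates, each of which is in particular a local logic gate. Write $\tau_{v_0}$ as a product of $QC(\tau_{v_0})$ local logic gates. Concatenating the two words writes $T$ as a composition of $QC_L(A)+QC(\tau_{v_0})$ local logic gates, hence $QC(T)\le QC_L(A)+QC(\tau_{v_0})$.

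I see no real obstacle. The only points that need care are the following.
\begin{enumerate}
\item The linear part of a local logic gate is again a (linear) local logic gate. This follows from the explicit form given in Theorem~\ref{theorem3.3}; note that the identity is allowed as a gate.
\item The order of composition in $T=\tau_{v_0}A$ is correct.
\end{enumerate}
As a remark, $QC(\tau_{v_0})\le\lceil w/2\rceil$, where $w$ is the Hamming weight of $v_0$, since a translation can be realized by translations supported on pairs of coordinates, as in the proof of Theorem~\ref{theorem3.4}.
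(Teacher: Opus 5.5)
Your proof is correct and follows the paper's argument. The lower bound is the linear-part argument from the proof of Lemma~\ref{lemma5.2}, applied to a gate decomposition of $T$. The upper bound comes from the factorization $T=\tau_{v_0}A$ together with subadditivity; there you use only the trivial inequality $QC(A)\le QC_L(A)$, where the paper cites the full equality of Lemma~\ref{lemma5.2}.
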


$$\\$$

\noindent{Proof of Theorem \ref{theorem5.3}:}
Clearly, we have 
$$QC(T)\leq QC (A) + QC(\tau_{v_0}). $$
Hence by Lemma \ref{lemma5.2}, we obtain
$$QC(T)\leq QC_L (A) + QC(\tau_{v_0}).$$
By the proof of Lemma \ref{lemma5.2}, we also have
$$QC_L(A)=QC(A)\leq QC(T).$$ \qed

$$\\$$

The following theorem states that the complexity of an affine transformation can be estimated by the linear quantum complexity of its linear part.
Hence the complexity of an affine transformation is essentially a problem about the row complexity of its linear part.

\begin{corollary}

Let $T$ be an affine transformation: $Tv=Av+v_0$ on the vector space 
$  \Bbb{Z}_2 ^n   $
with $A$ as its linear part. We have 
$$ QC_L (A) - \left\lceil \frac{n}{2} \right\rceil  \leq  QC(T)\leq QC_L (A) + 
\left\lceil \frac{n}{2} \right\rceil. $$

\label{corollary5.4}

\end{corollary}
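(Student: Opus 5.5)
The plan is to deduce Corollary~\ref{corollary5.4} directly from Theorem~\ref{theorem5.3}. The only new ingredient is a uniform bound on the cost of a pure translation:
\[
QC(\tau_{v_0}) \le \left\lceil \frac{n}{2}\right\rceil \quad \text{for every } v_0\in\mathbb Z_2^n .
\]

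\textbf{Step 1: bounding the translation.} Partition the coordinates $\{1,\dots,n\}$ into $\lceil n/2\rceil$ blocks, each of size two; when $n$ is odd, the last block reuses some other coordinate as a dummy partner. For a block $\{i,j\}$, take $b=((v_0)_i,(v_0)_j)\in\mathbb Z_2^2$. The translation $(a,c)\mapsto (a,c)+b$ is a permutation of $\mathbb Z_2^2$, so by Definition~\ref{definition3.2} it induces a local logic gate $\sigma_{i,j}$ that translates exactly coordinates $i$ and $j$. For the odd leftover coordinate we put a $0$ in the dummy slot. Translations commute, so composing these gates gives $\tau_{v_0}$. Blocks on which $v_0$ vanishes can be omitted, but this is not needed. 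The same observation was already used in the proof of Theorem~\ref{theorem3.4}.

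\textbf{Step 2: combining the bounds.} Theorem~\ref{theorem5.3} gives
\[
QC_L(A)\le QC(T)\le QC_L(A)+QC(\tau_{v_0}).
\]
Inserting Step 1 into the right-hand inequality yields the stated upper bound. The left-hand inequality already implies the stated lower bound, since
\[
QC_L(A)-\left\lceil \tfrac n2\right\rceil \le QC_L(A)\le QC(T).
\]
So the lower bound in the corollary is weaker than what we have, and no extra work is needed for it.

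\textbf{Where the care goes.} There is no real obstacle here. The one point that needs checking is that Definition~\ref{definition3.2} allows arbitrary permutations of $\mathbb Z_2^2$, and translations in particular, so each block really costs one gate. We should also confirm that $QC(A)=QC_L(A)$ (Lemma~\ref{lemma5.2}) is applied in the direction used inside Theorem~\ref{theorem5.3}.
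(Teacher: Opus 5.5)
Your proposal is correct and matches the paper's proof: combine Theorem~\ref{theorem5.3} with the bound $QC(\tau_{v_0})\le\lceil n/2\rceil$, which you justify by pairing coordinates (the paper simply asserts it), and obtain the lower bound from the stronger inequality $QC_L(A)\le QC(T)$. The only implicit assumption, which the paper shares, is $n\ge 2$, needed so that a partner coordinate exists for the leftover coordinate when $n$ is odd.
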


Finally, we give an example showing that the inequality can not be improved to an equality.

\begin{example}
Let \(n=2\), and define
\[
        T(x_1,x_2)=(x_1+x_2+1,x_2).
\]
Then \(T(v)=Av+v_0\), where
\[
        A(x_1,x_2)=(x_1+x_2,x_2),
        \qquad
        v_0=(1,0).
\]
We have
\[
        QC_L(A)=1,
        \qquad
        QC(\tau_{v_0})=1.
\]
However, \(T\) itself is a single local logic gate, since it is a permutation
of \(\mathbb Z_2^2\). Hence
\[
        QC(T)=1.
\]
Therefore
\[
        QC(T)<QC_L(A)+QC(\tau_{v_0}).
\]
Thus, the inequality
\[
        QC(T)\leq QC_L(A)+QC(\tau_{v_0})
\]
cannot be improved to an equality in general.
\end{example}

$$\\$$

\noindent{Proof of Corollary \ref{corollary5.4}:}
Our corollary follows from Theorem \ref{theorem5.3} and the inequality 
$$QC(  \tau_{v_0})\leq  \left\lceil \frac{n}{2} \right\rceil.$$
\qed

\begin{definition} 
Let $T$ be an invertible affine transformation of $\mathbb Z_2^n$. An
\emph{elementary affine operation} is an affine transformation supported on
at most two coordinates whose linear part is either the identity, a row swap,
or a row addition.
We define the elementary  quantum complexity  $QC_E(T)$ of $T$ to be the minimum number of elementary affine operations
whose composition is $T$.
\end{definition}

The following result indicates that the two notions of complexity are essentially equivalent.

\begin{proposition}
Let \(T:\Bbb{Z}_2^n\to \Bbb{Z}_2^n\) be an invertible affine
transformation. We have
\[
    QC(T)
    \leq
    QC_E(T)
    \leq
   2 QC(T).
\]
\end{proposition}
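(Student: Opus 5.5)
The plan is to prove the two inequalities separately, both by comparing gate sets: each elementary affine operation is a local logic gate, and each local logic gate is a product of at most two elementary affine operations.

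\textbf{The inequality $QC(T)\le QC_E(T)$.} An elementary affine operation is an affine transformation supported on at most two coordinates $i<j$. Its linear part is the identity, a row swap, or a row addition, so it is an element of $GL(2,\mathbb Z_2)$ acting on $x_i,x_j$. Its translation part is supported on $\{i,j\}$. Hence it acts as an invertible affine map of $\mathbb Z_2^2$ on coordinates $i,j$ and fixes the rest. By Lemma~\ref{lemma3.1} this map is a permutation $\sigma$ of $\mathbb Z_2^2$, so the operation equals $\sigma_{i,j}$, a local logic gate. If only one coordinate is involved, pick any second coordinate, which is possible since $n\ge2$. So any factorization of $T$ into $QC_E(T)$ elementary affine operations is also a factorization into local logic gates, and $QC(T)\le QC_E(T)$.

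\textbf{The inequality $QC_E(T)\le 2QC(T)$.} Write $T=\rho_1\cdots\rho_l$ with $l=QC(T)$ local logic gates. It suffices to show each $\rho_k$ is a product of at most two elementary affine operations. By Theorem~\ref{theorem3.3}, $\rho_k$ acts on some coordinates $i,j$ by $x\mapsto Bx+b$ with $B\in GL(2,\mathbb Z_2)$ and $b\in\mathbb Z_2^2$. Also by Theorem~\ref{theorem3.3}, $B$ is the identity, an elementary matrix (a swap or one of the two transvections), or a product $B=E_1E_2$ of two elementary matrices; the explicit factorizations are listed after Lemma~\ref{lemma3.1}.

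\begin{enumerate}
\item If $B$ is the identity or elementary, then $\rho_k$ is itself an elementary affine operation, with the translation $b$ absorbed.
\item Otherwise write $x\mapsto E_1E_2x+b$ as the composition of $x\mapsto E_2x$ followed by $y\mapsto E_1y+b$. Both maps are supported on $\{i,j\}$, have elementary linear part, and have translation supported on $\{i,j\}$. So both are elementary affine operations.
\end{enumerate}

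Concatenating these factorizations gives $T$ as a product of at most $2l$ elementary affine operations.

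The only step needing care is the case split on $B$. We must confirm that the two non-elementary elements of $GL(2,\mathbb Z_2)$, namely $\wamatrix{{{0,1},{1,1}}}$ and $\wamatrix{{{1,1},{1,0}}}$, are products of exactly two elementary matrices. The paper already gives $\wamatrix{{{0,1},{1,1}}}=\wamatrix{{{1,0},{1,1}}}\wamatrix{{{0,1},{1,0}}}$, and the other is its transpose (equivalently its inverse), handled the same way. We must also check that placing the translation in the outer factor keeps both factors supported on $\{i,j\}$, which holds by construction.
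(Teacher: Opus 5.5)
Your proof is correct and follows essentially the same route as the paper: elementary affine operations are local logic gates, and each local logic gate $x\mapsto Bx+b$ is a product of at most two elementary affine operations, obtained by factoring $B$ into at most two elementary matrices and absorbing $b$ into the outer factor. One small slip: $\begin{pmatrix}0&1\\1&1\end{pmatrix}$ is symmetric, so $\begin{pmatrix}1&1\\1&0\end{pmatrix}$ is its inverse, not its transpose. Your inverse argument (reversing the order of the two involutive factors) still works, as does the paper's explicit factorization $\begin{pmatrix}1&1\\1&0\end{pmatrix}=\begin{pmatrix}1&1\\0&1\end{pmatrix}\begin{pmatrix}0&1\\1&0\end{pmatrix}$.
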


\begin{proof}
The first inequality follows because every elementary affine operation is a
local logic gate.  Indeed, row swaps, row additions, and translations in at
most two coordinates are all permutations of \(\mathbb Z_2^2\) acting on a
chosen pair of coordinates.

We prove the second inequality.  A local logic gate is induced by a
permutation of the four-element set
$ \Bbb{Z}_2^2$.
By Lemma \ref{lemma3.1} every such permutation is affine. 
Therefore every local logic gate acting on coordinates \(i,j\) has the form
\[
\begin{pmatrix}x_i\\x_j\end{pmatrix}
\longmapsto
B\begin{pmatrix}x_i\\x_j\end{pmatrix}+b,
\]
where \(B\in GL_2(\Bbb{Z}_2)\) and \(b\in \Bbb{Z}_2^2\).

We now observe that each element of \(GL_2(\Bbb{Z}_2)\) can be written
as product of at most two elementary linear operations of the following types:
\[
        x_i\leftrightarrow x_j,
        \quad
        x_i\leftarrow x_i+x_j,
        \quad
        x_j\leftarrow x_j+x_i.
\]
Indeed, \(GL_2(\mathbb Z_2)\) has six elements:
\[
I,\quad
\begin{pmatrix}0&1\\1&0\end{pmatrix},\quad
\begin{pmatrix}1&1\\0&1\end{pmatrix},\quad
\begin{pmatrix}1&0\\1&1\end{pmatrix},\quad
\begin{pmatrix}0&1\\1&1\end{pmatrix},\quad
\begin{pmatrix}1&1\\1&0\end{pmatrix}.
\]
The first is the identity, the next three are elementary linear operations,
and the last two are products of a row swap and one row addition.  For
example,
\[
\begin{pmatrix}1 &1\\1&0\end{pmatrix}
=
\begin{pmatrix}1&1\\0&1\end{pmatrix}
\begin{pmatrix}0&1\\1&0\end{pmatrix}.
\]

The translation vector \(b\in\Bbb{Z}_2^2\) is a translation in at most two
coordinates, and hence is one of the allowed affine elementary operations.
Moreover, this translation may be combined with one of the two elementary
linear operations above, since our elementary affine operations are allowed
to include translations in the same two coordinates.

Thus every local logic gate can be expressed as a product of at most two
elementary affine operations.  Consequently, if \(T\) is a product of \(k\)
local logic gates, then \(T\) is a product of at most \(2k\) elementary
affine operations.  Taking \(k=QC(T)\), we obtain
\[
QC_E(T)\leq 2QC(T).
\]
Together with the first inequality, this proves the proposition.
\end{proof}

$$\\$$

\noindent {\bf Acknowledgement:} The authors are deeply grateful to Professor Michael Freedman for his generous guidance, insightful suggestions, and for introducing us to the complexity problem of local logic gates. His encouragement and perspective have been invaluable throughout this work.

AY would also like to express sincere gratitude to Professor Arthur Jaffe, Professor Kaifeng Bu, and Dr. Liyuan Chen for their mentorship, encouragement, and many inspiring discussions relating to Sections 3-5 of this project.
$$\\$$


\end{document}